\newtheorem{theorem}{Theorem}
\newcommand\numberthis{\addtocounter{equation}{1}\tag{\theequation}}
\newcommand{\E}{\mathbb{E}}
\newcommand{\detO}{\mathrm{det}}
\newcommand{\kk}{z}
\newcommand{\uv}{\tilde{V}_{\mathrm{LB}}}
\newcommand{\ov}{\tilde{V}_{\mathrm{UB}}}
\title{Compact Representation of Value Function \\ in Partially Observable Stochastic Games}
\author{
Karel Hor\'{a}k$^1$
\and
Branislav Bo\v{s}ansk\'{y}$^1$\and
Christopher Kiekintveld$^2$\And
Charles Kamhoua$^3$
\affiliations
$^1$Czech Technical University in Prague, FEE, Department of Computer Science\\
$^2$The University of Texas at El Paso, Computer Science Department\\
$^3$Army Research Laboratory, Network Security Branch
\emails
\{horak, bosansky\}@agents.fel.cvut.cz,
cdkiekintveld@utep.edu,
charles.a.kamhoua.civ@mail.mil
}
\begin{document}

\maketitle

\begin{abstract}
  Value methods for solving stochastic games with partial observability model the uncertainty about states of the game as a probability distribution over possible states. The dimension of this belief space is the number of states. For many practical problems, for example in security, there are exponentially many possible states which causes an insufficient scalability of algorithms for real-world problems. To this end, we propose an abstraction technique that addresses this issue of the curse of dimensionality by projecting high-dimensional beliefs to characteristic vectors of significantly lower dimension (e.g., marginal probabilities). Our two main contributions are (1) novel compact representation of the uncertainty in partially observable stochastic games and (2) novel algorithm based on this compact representation that is based on existing state-of-the-art algorithms for solving stochastic games with partial observability. Experimental evaluation confirms that the new algorithm over the compact representation dramatically increases the scalability compared to the state of the art.
\end{abstract}

\section{Introduction}

Partially Observable Stochastic Games (POSGs) are a very general model of dynamic multi-agent interactions under uncertainty.
This makes POSGs useful for modeling many types of problems where players have restricted information about the environment and they can dynamically react to other players based on limited observations of the actions of the opponents.  
Examples include patrolling games~\cite{Basilico2009,vorobeychik2014-icaps,basilico2016,kucera2018}, where a defender protects a set of targets against an attacker, and  pursuit-evasion~\cite{chung2011search} where a pursuer is trying to find and apprehend an evader.
Other types of security games~\cite{tambe2011,fang2015-ijcai,Fang16,nguyen2017-botnet} can be extended to more realistic dynamic settings if the defender can observe and react to some information about attackers during the game.  

The generality of POSGs comes at the cost of computational complexity. 
Computing optimal strategies in POSGs is highly intractable from the algorithmic perspective, even in the two-player zero-sum setting.
If we assume that both players have partial information about the environment, the players need to reason not only about their belief over possible states, but also about the belief the opponent has over the possible states, \emph{beliefs over beliefs}, and so on.
Restricting to subclasses of POSGs where this issue of \emph{nested beliefs} does not arise allows us to design and implement algorithms that are guaranteed to converge to optimal strategies~\cite{horak2017aaai,horak2019aaai}.
However, the scalability of the algorithms even for this case is limited.

One of the fundamental problems is the complexity of representing, updating, and reasoning about uncertainty over a potentially very large state space.  
In these POSGs (as in single-agent Partially Observable Markov Decision Processes (POMDPs)), beliefs are probability distributions over the possible states.
This is a~well-known disadvantage of these models, since memory and computation time grow rapidly due to the \emph{curse of dimensionality}. 
Taking a similar approach to previous work on POMDPs~(e.g., in \cite{roy2005finding,li2010improving,zhou2010solving}), we address this problem by introducing a compact representation of uncertainty in POSGs, and we develop a novel algorithm based on this representation that dramatically improves scalability.  

As a motivating domain, we consider a cybersecurity example where an attacker uses lateral movement actions to expand his control of a network without being detected.
The defender tries to observe the attacker and take actions to protect the network by reconfiguring honeypots. 
A perfect-information version of this problem was proposed in~\cite{kamdem2017markov}. 
However, a more realistic model is that the defender has a limited ability to observe the actions of the attacker using imperfect monitoring actions. 
This version of the lateral movement game with uncertainty can be modeled as a one-sided POSG~\cite{horak2017aaai}. 
To use previous algorithms to solve this POSG, the belief of the defender is defined over the possible subsets of resources that the attacker may currently control in the network.
This representation scales exponentially in the size of the network, so it is intractable for all but the smallest examples.
We return to our motivating domain in Section~4 where key steps of our novel algorithm are discussed specifically for this domain.

Our main technical contribution is replacing the representation of beliefs over the exponential number of possible states using a compact \emph{characteristic vector} that captures key information but reduces the dimensionality of the beliefs. 
For POMDPs, a similar compact representation has often been selected based on Principal Components Analysis~\cite{roy2005finding}.
Motivated by many uses of marginal probabilities in security games, we propose using the marginal probability of each resource being infected as a characteristic vector, instead of explicitly considering all possible subsets of infected resources. 
While this offers a path to scaling to much larger problems, it has consequences for the solution quality as well as the construction of the solution algorithm. Many components of state-of-the-art POSG solvers are based on manipulating the full belief distribution and we show how these can be redesigned to operate using the more compact summarized belief representations. 
We formally define the fixed-point equation, show that the value function in the compact representation is still a convex function, and that solving a compact representation of the game yields a lower bound to the value of the original game.
For our motivating domain, we experimentally demonstrate that our novel algorithm operating on the compact representation scales significantly better (in orders of magnitude) with a negligible loss in quality (less than $1\%$) compared to the state of the art algorithm. 


\section{One-sided POSGs}
A \emph{one-sided partially observable stochastic game}~\cite{horak2017aaai}, or one-sided POSG, is an imperfect-information two-player zero-sum infinite-horizon game with perfect recall represented by a tuple $(S,A_1,A_2,O,T,R)$.
The game is played for an infinite number of \emph{stages}.
At each stage, the game is in one of the states $s \in S$ and players choose their actions $a_1 \in A_1$ and $a_2 \in A_2$ simultaneously.
An initial state of the game is drawn from a probability distribution $b^0 \in \Delta(S)$ over states termed the \emph{initial belief}.
The one-sided nature of the game translates to the fact that while player~1 lacks detailed information about the course of the game, player~2 is able to observe the game perfectly (i.e., his only uncertainty is the action $a_1$ player~1 is about to take in the current stage).

The choice of actions determines the outcome of the current stage:
Player 1 gets an \emph{observation} $o \in O$ and the game transitions to a state $s' \in S$ with probability $T(o,s' \,|\, s,a_1,a_2)$, where $s$ is the current state.
Furthermore, player 1 gets a reward $R(s,a_1,a_2)$ for this transition, and player 2 receives $-R(s,a,a')$.
The rewards are discounted over time with discount factor $\gamma < 1$.

One-sided POSGs can be solved by approximating the optimal value function $V^*: \Delta(S) \rightarrow \mathbb{R}$ using a pair of value functions $\underline{V}$ (lower bound on $V^*$) and $\overline{V}$ (upper bound on $V^*$).
The algorithm proposed in~\cite{horak2017aaai} refines these bounds by solving a sequence of \emph{stage games}.
In each of these stage games, the algorithm focuses on deciding the optimal strategies of the players in this stage (i.e., $\pi_1 \in \Delta(A_1)$ for player~1 and $\pi_2: S \rightarrow \Delta(A_2)$ for player~2) while assuming that the play in the subsequent stages yields values represented by value functions $\underline{V}$ or $\overline{V}$, respectively.
Solving the stage games also defines the fixed point equation for $V^*$,
\begin{align*}
    & V^*(b) = HV^*(b) = \min_{\pi_2} \max_{\pi_1} \Big( \E_{b,\pi_1,\pi_2}[R] \ + \numberthis\label{eq:orig_fixpoint} \\
    & \qquad + \gamma \sum_{a_1, o} {\Pr}_{b,\pi_1,\pi_2}[a_1,o] \cdot V^*(\tau(b,a_1,\pi_2,o)) \Big) \ \text{,}
\end{align*}
where $\tau(b,a_1,\pi_2,o)$ denotes the Bayesian update of the belief of player~1 when he played $a_1$ and observed $o$.
Note that actions of player~2 who has complete information affect the observations player 1 receives and thus belief of player 1.

Each stage game is parameterized by the current belief $b \in \Delta(S)$ of player 1.
For piecewise-linear and convex $\underline{V}$ and $\overline{V}$, a stage game is solved using linear programming.
We show this linear program for $\underline{V}$ (represented as a point-wise maximum over a set $\Gamma$ of linear functions $\alpha_i: \Delta(S) \rightarrow \mathbb{R}$).

\begin{subequations}\label{lp:onesided}

\small
\begin{align}
    \min & \ V \\
    \text{s.t.} & \sum_{a_2} \pi_2(s \wedge a_2) = b(s) & \forall s \label{lp:onesided:pi-sum} \\
                & V \geq \sum_{s, a_2} \pi_2(s \wedge a_2) R(s,a_1,a_2) + \gamma \sum_{o} V_{a_1 o} \!\!\!\!\!\!\!\!\! & \forall a_1 \label{lp:onesided:br} \\
                & b^{a_1 o}(s') = \sum_{s,a_2} \pi_2(s \wedge a_2) T(o,s'|s,a_1,a_2) \!\!\!  & \forall a_1, o, s' \label{lp:onesided:b-prime} \\
                & V_{a_1 o} \geq \sum_{s'} \alpha_i(s') b^{a_1 o}(s') & \forall a_1, o, \alpha_i \label{lp:onesided:subgame-alpha} \\
                & \pi_2(s \wedge a_2) \geq 0 & \forall s, a_2 \label{lp:onesided:pi-strategy}
\end{align}
\normalfont
\end{subequations}

In this linear program, player~2 seeks a strategy $\pi_2$ for the current stage of the game to minimize the utility $V$ of player~1.
Here $\pi_2(s \wedge a_2)$ stands for the joint probability (ensured by constraints~\eqref{lp:onesided:pi-sum} and~\eqref{lp:onesided:pi-strategy}) that the current state of the game is $s$ and player~2 plays the action $a_2$.
Constraint~\eqref{lp:onesided:br} stands for player~1 choosing the best-responding action $a_1$.
Constraint~\eqref{lp:onesided:b-prime} expresses the belief in the subsequent stage of the game when $a_1$ was played by player~1 and observation $o$ was seen (multiplied by the probability of seeing that observation), and finally constraint~\eqref{lp:onesided:subgame-alpha} represents the value of $\underline{V}$ in the belief $b^{a_1 o}$.
Such a linear program can be then combined with point-based variants of the value-iteration algorithm, such as Heuristic Search Value Iteration (HSVI)~\cite{smith2004uai,horak2017aaai}.

A different approximation scheme is used for the upper bound $\overline{V}$ on $V^*$.
Instead of using the point-wise maximum over linear function, $\overline{V}$ is expressed using a set $\Upsilon = \lbrace (b^{(i)},y^{(i)}) \,|\, 1 \leq i \leq |\Upsilon| \rbrace$ of points.
The lower convex hull of this set of points is then formed to obtain the value of $\overline{V}(b)$,
\begin{equation}
    \overline{V}(b) = \!\! \min_{\lambda \in \mathbb{R}^{|\Upsilon|}_{\geq 0}} \left\lbrace \sum_{1 \leq i \leq |\Upsilon|} \!\!\!\! \lambda_i y^{(i)} \ |\ \mathbf{1}^T \lambda = 1, \!\! \sum_{1 \leq i \leq |\Upsilon|} \!\!\!\! \lambda_i b^{(i)} = b \right\rbrace \ \text{.} \label{eq:convex-hull}
\end{equation}
Constraint~\eqref{lp:onesided:subgame-alpha} can then be adapted to use this representation of $\overline{V}$ as detailed in~\cite{horak2016gamesec}.

\section{Compact Representation of $V^*$}
The dimension of the value function $V^*$ depends on the number of states, which can be very large in cases like our motivating domain where the number of states is exponential in the size of the network. 
We propose an abstraction scheme called \emph{summarized abstraction} to decrease the dimensionality of the problem by creating a simplified represenation of the beliefs over the state space.   

We associate each belief $b \in \Delta(S)$ in the game with a \emph{characteristic vector} $\chi^{(b)} = \mathbf{A} \cdot b$ (for some matrix $\mathbf{A} \in \mathbb{R}^{k \times |S|}$ where $k \ll |S|$) and we define an (approximate) value function $\tilde{V}^*: \mathbb{R}^k \rightarrow \mathbb{R}$ over the space of characteristic vectors.

The main goal is to adapt algorithms based on value iteration to operate over the more compact space $\mathbb{R}^{k}$ instead of the original belief space $\Delta(S)$. 
First, we adapt the fixed point equation~\eqref{eq:orig_fixpoint} for one-sided POSGs to work with compact $\tilde{V}^*$ (instead of original $V^*$) by allowing  player 2 to choose \emph{any} belief that is consistent with the given characteristic vector $\chi$.
\begin{align*}
    & \tilde{V}^*(\chi) = \tilde{H}\tilde{V}^*(\chi) = \min_{b|\mathbf{A}b=\chi} \min_{\pi_2} \max_{\pi_1} \Big( \E_{b,\pi_1,\pi_2}[R] \ + \numberthis\label{eq:abs_fixpoint} \\
    & \qquad + \gamma \sum_{a_1, o} {\Pr}_{b,\pi_1,\pi_2}[a_1,o] \cdot \tilde{V}^*(\chi(\tau(b,a_1,\pi_2,o))) \Big)
\end{align*}

Next, we show that value function computed using dynamic operator $\tilde{H}$ gives a valid lower bound estimate of the original game representation over the belief space.
\begin{theorem}
    $\tilde{V}^*(\chi^{(b)}) \leq V^*(b)$ for every $b \in \Delta(S)$.
\end{theorem}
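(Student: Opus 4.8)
The plan is to recast the claim as a standard \emph{sub-solution} argument for the Bellman operator $H$. Define $W\colon \Delta(S)\to\mathbb{R}$ by $W(b) := \tilde{V}^*(\chi^{(b)}) = \tilde{V}^*(\mathbf{A}b)$, so that the statement to prove is exactly $W(b)\le V^*(b)$ for all $b$. I would rely on three properties of $H$ that are established for one-sided POSGs in~\cite{horak2017aaai}: $H$ is \emph{monotone} (if $f\le g$ pointwise then $Hf\le Hg$), $H$ is a $\gamma$-contraction in the supremum norm, and $V^*$ is its unique fixed point, so that $H^nW\to V^*$ for every bounded function $W$. Boundedness of $W$ is immediate, since $\tilde{V}^*$ is the value of a discounted game with bounded rewards and is thus bounded by $R_{\max}/(1-\gamma)$.

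The heart of the argument is to show that $W$ is a sub-solution, i.e.\ $W\le HW$ pointwise. Starting from the fixed-point identity $W(b)=\tilde{V}^*(\chi^{(b)})=\tilde{H}\tilde{V}^*(\chi^{(b)})$ and expanding~\eqref{eq:abs_fixpoint} at $\chi^{(b)}=\mathbf{A}b$, I would first note that each continuation term $\tilde{V}^*(\chi(\tau(b',a_1,\pi_2,o)))$ equals $W(\tau(b',a_1,\pi_2,o))$ by the very definition of $W$ --- this is an identity, not an inequality. The decisive observation is then that the true belief $b$ satisfies $\mathbf{A}b=\chi^{(b)}$ and is therefore feasible for the outer minimization $\min_{b'\mid \mathbf{A}b'=\chi^{(b)}}$ in $\tilde{H}$. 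Restricting that minimization to the single choice $b'=b$ can only raise the value, and the resulting quantity is precisely the right-hand side of~\eqref{eq:orig_fixpoint} with $V^*$ replaced by $W$. Hence
\[
W(b) \;=\; \tilde{H}\tilde{V}^*(\mathbf{A}b) \;\le\; (HW)(b)\,.
\]

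I would then conclude by monotone iteration. From $W\le HW$ and monotonicity of $H$, a one-line induction gives $W\le H^nW$ for every $n\ge 0$; letting $n\to\infty$ and using $H^nW\to V^*$ (the contraction property) yields $W\le V^*$, which is the claim. An equivalent route is to prove $\tilde{V}^*_n(\chi^{(b)})\le V^*_n(b)$ by induction along synchronized value-iteration sequences for $\tilde{H}$ and $H$, the inductive step using the same two facts. I expect the only delicate point to be the sub-solution step, where one must carefully justify both that feasibility of $b$ for the constraint $\mathbf{A}b'=\chi^{(b)}$ makes the restriction to $b'=b$ legitimate, and that the passage from $\tilde{V}^*$ at the updated characteristic vector to $W$ at the updated belief is a definitional identity rather than an inequality. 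The monotonicity and contraction of $H$ are inherited directly from the underlying POSG framework and require no new work.
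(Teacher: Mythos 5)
Your proof is correct and rests on exactly the same two ingredients as the paper's: monotonicity of the exact operator $H$, and the observation that the true belief $b$ is feasible for the extra minimization $\min_{b' \mid \mathbf{A}b' = \chi^{(b)}}$ in $\tilde{H}$, so that restricting that minimization to $b' = b$ (together with the definitional identity $\tilde{V}^*(\chi(\tau(b,a_1,\pi_2,o))) = W(\tau(b,a_1,\pi_2,o))$, which you rightly flag as an identity rather than an estimate) converts $\tilde{H}\tilde{V}^*(\chi^{(b)})$ into $(HW)(b)$. What differs is the packaging. The paper proves the bound by a synchronized induction along value iteration --- starting from arbitrary $\tilde{V}_0$ with $V_0 = \tilde{V}_0 \circ \chi$ and showing $\tilde{V}_{t+1}(\chi^{(b)}) = \tilde{H}\tilde{V}_t(\chi^{(b)}) \le H(\tilde{V}_t \circ \chi)(b) \le HV_t(b) = V_{t+1}(b)$ --- which is precisely the ``equivalent route'' you sketch in passing; the limit $t \to \infty$ to the fixed points is left implicit there. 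You instead work once at the fixed point, establish the sub-solution inequality $W \le HW$, and conclude via $W \le H^nW \to V^*$ by monotone iteration and contraction. Your formulation is slightly cleaner in that it makes explicit the steps the paper's two-line proof glosses over (boundedness of $W$, where the fixed-point property of $\tilde{V}^*$ enters, and the limiting argument), at the cost of presupposing that $\tilde{H}$ has a fixed point at all --- a fact the paper's equation~\eqref{eq:abs_fixpoint} likewise takes for granted, so neither route is more demanding in its prerequisites. The paper's inductive form, by contrast, additionally yields the finite-horizon statement $\tilde{V}_t \circ \chi \le V_t$ at every iterate $t$, which is the invariant the value-iteration-style algorithm actually maintains during computation.
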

\begin{proof}
    We prove the theorem by induction.
    Let $\tilde{V}_0$ be arbitrary and let $V_0$ of the original game be $V_0(b) = \tilde{V}_0(\chi^{(b)})$ (i.e., $\tilde{V}_0 \leq V_0$).
    
    Assume $\tilde{V}_t(\chi^{(b)}) \leq V_t(b)$ for every $b \in \Delta(S)$.
    Now $H(\tilde{V}_t \circ \chi) \leq H(V_t)$.
    The extra minimization over beliefs $b$ in $H\tilde{V}$ can only decrease the utility of the stage game and hence
    \begin{equation}
        \tilde{V}_{t+1}(\chi) = \tilde{H}\tilde{V}_t(\chi) \leq H(\tilde{V}_t \circ \chi) \leq HV_t(b) = V_{t+1}(b)
    \end{equation}
\end{proof}

Note that the equation~\eqref{eq:abs_fixpoint} can also be solved using linear programming.
Consider a lower bound $\uv$ on $\tilde{V}^*$ formed as a point-wise maximum over linear functions $\alpha_i(\chi) = (\mathbf{a}^{(i)})^T \chi + \kk^{(i)}$.
We modify the linear program~\eqref{lp:onesided} by considering that the belief $b$ is a variable (constrained by $\chi$),
\begin{subequations}\label{lp:abstracted}
\begin{align}
    & \text{constraints \eqref{lp:onesided:pi-sum}, \eqref{lp:onesided:br}, \eqref{lp:onesided:b-prime}, \eqref{lp:onesided:pi-strategy}} \\
    & \sum_{s} b(s) = 1 \label{lp:abstracted:b-sum}\\
    & \mathbf{A} \cdot b = \chi \label{lp:abstracted:chi}\\
    & b(s) \geq 0 & \forall s \ \text{,}
\end{align}
and we replace the constraint~\eqref{lp:onesided:subgame-alpha} to account for different representation of $\uv$ by
\begin{align}
    & \chi^{a_1 o} = \mathbf{A} \cdot b^{a_1 o} & \forall a_1, o \ \ \ \\
    & q^{a_1 o} = \mathbf{1}^T \cdot b^{a_1 o} & \forall a_1, o \ \ \ \\
    & V_{a_1 o} \geq (\mathbf{a}^{(i)})^T \cdot \chi^{a_1 o} + \kk^{(i)} \cdot q^{a_1 o} & \forall a_1, o, \alpha_i \ \text{,} \label{lp:abstracted:subgame-alpha}
\end{align}
\end{subequations}
where $q^{a_1 o}=\mathbf{1}^T \cdot b^{a_1 o}$ is the probability that $o$ is generated when player~1 uses action $a_1$.

Observe that the only constraints with non-zero right-hand side in this new linear program are constraints~\eqref{lp:abstracted:b-sum} and~\eqref{lp:abstracted:chi}.
This is critical in the following proof that $\tilde{V}^*$ is convex.

\begin{theorem}\label{thm:convex}
    Value function $\tilde{V}^*$ is convex.
\end{theorem}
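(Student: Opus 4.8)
The plan is to show that the Bellman operator $\tilde{H}$ maps convex functions to convex functions, and then to obtain convexity of the fixed point $\tilde{V}^*$ as a limit of value iteration. Concretely, I would start from a convex (piecewise-linear) initialization such as $\uv$, argue that each iterate $\tilde{V}_{t+1} = \tilde{H}\tilde{V}_t$ remains convex, and conclude by noting that $\tilde{H}$ is a $\gamma$-contraction (exactly as the original operator $H$), so that $\tilde{V}_t \to \tilde{V}^*$ pointwise and a pointwise limit of convex functions is convex. The entire weight of the argument therefore rests on a single inductive step: if $\uv$ is convex and piecewise linear, then $\chi \mapsto \tilde{H}\uv(\chi)$ is convex.

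For this step I would exploit precisely the structural observation highlighted before the theorem: in the linear program~\eqref{lp:abstracted} that evaluates $\tilde{H}\uv(\chi)$, the parameter $\chi$ enters only through the right-hand side of constraint~\eqref{lp:abstracted:chi}, while every other constraint has a constant right-hand side ($0$ or the $1$ of~\eqref{lp:abstracted:b-sum}). Writing the program in standard form as $\min_x\{c^T x : Mx = d(\chi),\ x \geq 0\}$, where $x$ collects all decision variables ($V$, $\pi_2(s\wedge a_2)$, $b(s)$, $b^{a_1o}(s')$, $\chi^{a_1o}$, $q^{a_1o}$, $V_{a_1o}$) and $d(\chi)$ depends affinely on $\chi$, strong duality (valid wherever the program is feasible and bounded) gives
\[
\tilde{H}\uv(\chi) = \max_{y}\{\, d(\chi)^T y : M^T y \leq c \,\}.
\]
The crucial point is that the dual feasible polyhedron $\{y : M^T y \leq c\}$ does not depend on $\chi$. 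Hence $\tilde{H}\uv(\chi)$ is a pointwise supremum, over a fixed set of dual-feasible $y$, of the functions $\chi \mapsto d(\chi)^T y$, each of which is affine in $\chi$. A pointwise supremum of affine functions is convex, which establishes convexity of $\tilde{H}\uv$ (and, since the supremum is attained at a vertex, its piecewise linearity as well).

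Two loose ends remain, neither of which I expect to be difficult. First, I would confirm that the effective domain -- the set of $\chi$ for which~\eqref{lp:abstracted} is feasible -- is itself convex; this is immediate since it is the linear image $\{\mathbf{A}b : b \in \Delta(S)\}$ of the simplex. Second, the inner minimization $\min_{b\,|\,\mathbf{A}b=\chi}$ in~\eqref{eq:abs_fixpoint} is harmless: promoting $b$ to an ordinary decision variable is exactly what keeps $\chi$ confined to the right-hand side, so it aids rather than obstructs the argument. The main obstacle, and the place where I would be most careful, is purely bookkeeping: verifying that after the homogenization introduced by the variables $q^{a_1o}$ in~\eqref{lp:abstracted:subgame-alpha} every constraint except~\eqref{lp:abstracted:b-sum} and~\eqref{lp:abstracted:chi} genuinely has a constant right-hand side, since a single stray $\chi$-dependent entry in the constraint matrix $M$ (rather than in $d(\chi)$) would make the dual polyhedron vary with $\chi$ and destroy the supremum-of-affine-functions structure.
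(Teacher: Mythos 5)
Your proposal is correct and follows essentially the same route as the paper's proof: since $\chi$ appears only on the right-hand side of the primal, the dual feasible polyhedron is independent of $\chi$ and $\tilde{H}\uv(\chi)$ is a pointwise maximum of functions affine in $\chi$, hence convex, and convexity then propagates through value iteration to the fixed point. Your extra care about the pointwise limit via the contraction property and the convexity of the effective domain merely makes rigorous the paper's terser concluding step.
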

\begin{proof}
    Consider a dual formulation of the linear program~\eqref{lp:onesided} updated according to equations~\eqref{lp:abstracted}.
    Since the only non-zero right-hand side terms in the primal are 1 and $\chi$, the objective of the dual formulation is $o(\chi)=\chi^T \cdot \mathbf{a} + \kk$.
    Moreover, this is the only place where the characteristic vector $\chi$ occurs.
    Hence the polytope of the feasible solutions of the dual problem is the same for every characteristic vector $\chi \in \mathbb{R}^k$ and $o(\chi)$ (after fixing variables $\mathbf{a}$ and $\kk$) forms a bound on the objective value of the solution for \emph{arbitrary} $\chi$.
    Since we maximize over all possible $o(\chi)$ in the dual, the objective value of the linear program (and also $H\uv$) is convex in the parameter $\chi$.
    
    Now, starting with an arbitrary convex (e.g., linear) $\uv^0$, the sequence of functions $\lbrace \uv^t \rbrace_{t=0}^\infty$, where $\uv^{t+1}=\tilde{H}\uv^t$, is formed only by convex functions.
    Therefore, the fixed point  $\tilde{V}^*$ is also a convex function.
\end{proof}

\subsection{HSVI Algorithm for Compact POSGs}\label{sec:algorithm}
The Algorithm~\ref{alg:hsvi} we propose for solving abstracted games is a modified version of the original heuristic search value iteration algorithm (HSVI) for solving unabstracted one-sided POSGs~\cite{horak2017aaai}.
The key difference here is that we use the value functions $\uv$ and $\ov$ (instead of $\underline{V}$ and $\overline{V}$) and we must have modified all parts of the algorithm to use the abstracted representation of the beliefs.

\begin{algorithm}[t]
\caption{HSVI algorithm for one-sided POSGs when summarized abstraction is used.}
\label{alg:hsvi}
\SetKwFunction{Explore}{Explore}
\SetKwProg{myproc}{procedure}{}{}
\DontPrintSemicolon
Initialize $\uv$ and $\ov$ to lower and upper bound on $\tilde{V}^*$ \label{alg:hsvi:init}\;
\While{$\ov(\chi^0) - \uv(\chi^0) > \epsilon$}{\label{alg:hsvi:termination}
  \Explore{$\chi^0,\epsilon,0$} \label{alg:hsvi:exploration}\;
}
\vspace{0.3em}
\myproc{\Explore{$\chi,\epsilon,t$}}{\label{alg:hsvi:explore}
  $(b,\pi_2) \gets $ optimal belief and strategy of player 2 in $\tilde{H}\uv(\chi)$ \label{alg:hsvi:follower-choice}\;
  $(a_1,o) \gets $ select according to heuristic \label{alg:hsvi:heuristic}\;
  $\chi' \gets \tau(\chi,a_1,\pi_2,o)$ \label{alg:hsvi:chi-prime}\;
  Update $\Gamma$ and $\Upsilon$ based on the solutions of $\tilde{H}\uv(\chi)$ and $\tilde{H}\ov(\chi)$ \label{alg:hsvi:update:pre} \;
  \If{$\ov(\chi') - \uv(\chi') > \epsilon\gamma^{-t}$}{
      \Explore{$\chi',\epsilon,R,t+1$} \label{alg:hsvi:recursion}\;
      Update $\Gamma$ and $\Upsilon$ based on the solutions of $\tilde{H}\uv(\chi)$ and $\tilde{H}\ov(\chi)$ \label{alg:hsvi:update:post} \;
  }
}
\end{algorithm}

First, we initialize bounds $\uv$ and $\ov$ (line~\ref{alg:hsvi:init}) to valid piecewise linear and convex lower and upper bounds on $\tilde{V}*$ (using sets $\Gamma$ of linear functions $\alpha_i=(\mathbf{a}^{(i)})^T\chi + \kk^{(i)}$ and $\Upsilon$ of points $(\chi^{(i)},y^{(i)})$).
Then, we perform a sequence of trials (lines~\ref{alg:hsvi:termination}--\ref{alg:hsvi:exploration}) from the initial characteristic vector $\chi^0=\mathbf{A} b^0$ until the desired precision $\epsilon > 0$ is reached.

In each of the trials, we first compute the optimal \emph{optimistic} strategy of player~2, which in this case is the selection of the belief $b$ and strategy $\pi_2$ (line~\ref{alg:hsvi:follower-choice}).
Next, we choose the action $a_1$ of player~1 and the observation $o$ (lines~\ref{alg:hsvi:heuristic}--\ref{alg:hsvi:chi-prime}) so that the excess approximation error $\ov(\chi')-\uv(\chi')-\epsilon\gamma^{-t}$ in the subsequent stage (where the belief is described by a characteristic vector $\chi'=\tau(\chi,a_1,\pi_2,o)$) is maximized.
If this excess approximation error is positive, we recurse to the characteristic vector $\chi'$ (line~\ref{alg:hsvi:recursion}).

Before and after the recursion the bounds $\uv(\chi)$ and $\ov(\chi)$ are improved using the solution of $\tilde{H}\uv(\chi)$ and $\tilde{H}\ov(\chi)$ (lines~\ref{alg:hsvi:update:pre} and~\ref{alg:hsvi:update:post}).
The update of $\ov$ is straightforward and a new point $(\chi,\tilde{H}\ov(\chi))$ is added to $\Upsilon$.
To obtain a new linear function to add to the set $\Gamma$, we use the objective function $o(\chi)=\chi^T \cdot \mathbf{a} + \kk$ (after fixing variables $\mathbf{a}$ and $\kk$) of the dual linear program to~\eqref{lp:abstracted} that forms a lower bound on $\tilde{H}\uv$ and $\tilde{V}^*$ (see the proof of Theorem~\ref{thm:convex} for more discussion).

\section{The Lateral Movement POSG}\label{sec:model}
We illustrate our approach using a cybersecurity scenario based on the critical problem of detecting and mitigating lateral movement by attackers in networks. The game is played on a directed acyclic graph $G=(V,E)$, where the vertices $V=\lbrace v_1, \ldots, v_{|V|} \rbrace$ are sorted in topological order.
The goal of the attacker is to reach vertex $v_{|V|}$ from the initial source of the infection $v_1$ by traversing the edges of the graph, while minimizing the cost to do so.

Initially, the attacker controls only the vertex $v_1$, i.e., the initial \emph{infection} is $I_0=\lbrace v_1 \rbrace$.
Then, in every stage of the game, the attacker chooses a directed path $P=\lbrace P(i) \rbrace_{i=1}^k$ (where $k$ is the length of the path) from any of the infected vertices to the target vertex $v_{|V|}$.
Unless the defender takes countermeasures, the attacker infects all the vertices on the path, including the target vertex $v_{|V|}$, and pays the cost of traversing each of the edges on the path,
\begin{equation}
    c_{-,P} = \sum_{P(i) \in P} C(P(i)) \ \text{,}
\end{equation}
where $C(P(i))$ is a cost associated with taking edge $P(i)$.

The defender tries to discover the attacker and increase his costs by deploying honeypots into the network.
The honeypot is deployed on an edge (denote this edge $h$) of the graph, and is able to detect if the attacker traverses that specific edge.
Furthermore, it also increases the cost for using the edge $h$ to $\overline{C}(h)$.
If the attacker observes that he has traversed a honeypot, he may decide to change his plan and therefore does not execute the rest of his originally intended path $P$.
The cost of playing a path $P$ against a honeypot placement $h$ is therefore
\begin{equation}
    c_{h,P} = \sum_{P(i) \in P_{\leq h}} C(P(i)) + \mathbf{1}_{h \in P} \cdot [\overline{C}(h) - C(h)]
\end{equation}
where $P_{\leq h}$ is the prefix of the path $P$ until the interaction with the honeypot edge $h$,
\begin{equation}
    P_{\leq h} =
    \begin{cases}
        P & h \not\in P \\
        \lbrace P(i) \rbrace_{i=1}^{\overline{i}} \text{ where } P(\overline{i})=h & \text{otherwise\ \ .}
    \end{cases}
\end{equation}
Since the attacker does not need to continue to execute his selected path $P$ (since the defender can reconfigure the position of the honeypot), the new infection $I_{h,P}$ becomes
\begin{equation}
    I_{h,P} = I \cup \left\lbrace v \,|\, (u,v) \in P_{\leq h} \right\rbrace \ \text{.}
    \label{eq:I-transition}
\end{equation}

The above problem can be formalized as a one-sided POSG where
\begin{compactitem}
    \item states are possible infections, i.e., $S=2^V$,
    \item actions of the defender are honeypot allocations, i.e., $A_1=E$,
    \item actions $A_2$ of the attacker are paths in $G$ reaching $v_{|V|}$,
    \item observations denote whether the defender detected the attacker (i.e., $h \in P$) or the attacker has reached the target $v_{|V|}$ while avoiding the detection, $O=\lbrace \detO, \neg\detO \rbrace$,
    \item transitions follow the equation~\eqref{eq:I-transition} and observation $\detO$ is generated iff the honeypot edge $h$ has been traversed,
    \item reward of the defender is the negative value of the cost of the attacker, i.e., $R(h,P)=-c_{h,P}$,
    \item discount factor of the game is $\gamma=1$, and
    \item initial belief $b^0$ of the game satisfies $b^0(I_0)=1$.
\end{compactitem}
Note that the original HSVI algorithm for one-sided POSGs has been defined and proved for discounted problems with $\gamma<1$.
In this particular case, however, we expect that the convergence properties translate even to the undiscounted case since the game is essentially finite (in a finite number of steps, \emph{all} vertices, including $v_{|V|}$, get infected and the game ends).

\subsection{Characteristic Vectors}
The number of states in the game is exponential in the number of vertices of the graph, $|S| = 2^{|V|-2} + 1$ (we consider that $v_1$ is always infected and we treat all states where $v_{|V|}$ is infected as a single terminal state of the game).
We propose to use marginal probabilities of a vertex being infected as characteristic vectors $\chi \in \mathbb{R}^{|V|}$, i.e.
\begin{equation}
    \chi^{(b)}_i = \sum_{I \in S \,|\, v_i \in I} b(I) \ \text{,}
\end{equation}
where $I$ corresponds to some subset of possibly infected vertices of the graph and $b(I)$ denotes the original belief over this subset of vertices being infected. 

\subsection{Value Function Representation}
The algorithm from Section~\ref{sec:algorithm} approximates $\tilde{V}^*$ using a pair of value functions, the lower bound $\uv$ and the upper bound $\ov$.
While we have discussed the representation of the lower bound (which is represented using a set $\Gamma$ of linear functions $\alpha_i(\chi)=(\mathbf{a}^{(i)})^T \chi + \kk^{(i)}$), the representation of the upper bound is more challenging.

In the original algorithm, the value function $\overline{V}$ is defined (see equation~\eqref{eq:convex-hull}) over the probability simplex $\Delta(S)$.
In that case, it suffices to consider $|S|$ points in $\Upsilon$ to define $\overline{V}$ for every belief.
In contrary, the space of characteristic vectors (i.e., marginal probabilities) is formed by a hypercube $[0,1]^{|V|}$ with $2^{|V|}$ vertices, which would make the straightforward point representation (using equation~\eqref{eq:convex-hull}) impractical.

\begin{figure}
    \centering
    \includegraphics{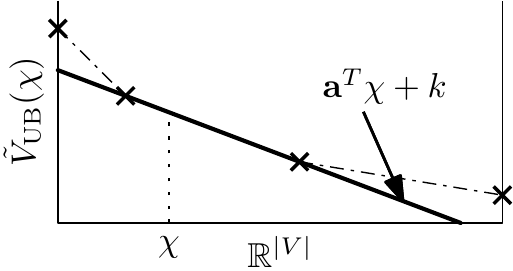}
    \caption{Dual interpretation of the projection on the convex hull.}
    \label{fig:projection}
\end{figure}
We can, however, leverage the fact that in this domain infecting an additional node can only decrease the cost to the target (and hence $\tilde{V}^*$ is decreasing).
Consider the dual formulation of the optimization problem~\eqref{eq:convex-hull}.
In this formulation, the projection of $\chi$ to the lower convex hull of a set of points is represented by the \emph{optimal} linear function $\mathbf{a}^T \chi + \kk$ defining a facet of the convex hull (see Figure~\ref{fig:projection}).
Since $\tilde{V}^*$ is decreasing in $\chi$, we can also enforce that $\mathbf{a}^T\chi + \kk$ is decreasing in $\chi$ (i.e., add the constraint $\mathbf{a} \leq 0$ to the dual formulation).
This additional constraint translates to a change of the equality $\sum_{1 \leq i \leq |\Upsilon|} \lambda_i \chi^{(i)} = \chi$ in the primal problem to an inequality.
\begin{equation}
    \ov(b) = \!\!\min_{\lambda \in \mathbb{R}^{|\Upsilon|}_{\geq 0}} \!\! \left\lbrace \sum_{1 \leq i \leq |\Upsilon|} \!\!\!\! \lambda_i y^{(i)} \ |\ \mathbf{1}^T \lambda = 1, \!\!\! \sum_{1 \leq i \leq |\Upsilon|} \!\!\!\! \lambda_i \chi^{(i)} \leq \chi \right\rbrace \label{eq:convex-hull-chi}
\end{equation}
Now it is sufficient that the set $\Upsilon$ contains just one point $(\chi^{(i)},y^{(i)})$ where $\chi^{(i)}=\mathbf{0}^{|V|}$ (instead of $2^{|V|}$ points) to make the constraint $\sum_{1 \leq i \leq |\Upsilon|} \lambda_i \chi^{(i)} \leq \chi$ satisfiable.

It is possible to adapt the constraint~\eqref{lp:abstracted:subgame-alpha} to use the representation from~\eqref{eq:convex-hull-chi}, similarly to the original (unabstracted) one-sided POSGs~\cite{horak2016gamesec}, and thus obtain linear program to perform the upper bound computation.

\subsection{Using Marginalized Strategies in Stage Games}
The linear program formed by modifications from equations~\eqref{lp:abstracted} still requires solving the stage game for the original, unabstracted problem.
In this section, we show that it is possible to avoid expressing the belief $b$ expicitly, and to compute the stage game directly using the characteristic vectors and marginalized strategies of the attacker.

First, we present the representation of the stage-game strategies of the attacker.
Instead of representing joint probabilities $\pi_2(I \wedge P)$ of choosing path $P$ in state $I$, we only model the probability $\tilde{\pi}_2(P)$ of choosing path $P$ aggregated over all states $S$.
Furthermore, we allow the attacker to choose the probability $\xi(P \wedge v_i)$ that vertex $v_i$ is infected while he opts to follow path $P$.
\begin{subequations}
\begin{align}
    & \sum_{P} \tilde{\pi}_2(P) = 1 \\
    & 0 \leq \xi(P \wedge v_i) \leq \tilde{\pi}_2(P) & \forall P, v_i \\
    & \tilde{\pi}_2(P) \geq 0 & \forall P
\end{align}
To ensure that the strategy represented by variables $\tilde{\pi}_2$ and $\xi$ is feasible it must be consistent with the characteristic vector $\chi$, where $\chi_i$ is the probability that the vertex $v_i$ is infected at the beginning of the stage.
\begin{align}
    & \sum_{P} \xi(P \wedge v_i) = \chi_i & \forall v_i
\end{align}
Furthermore, the path $P$ must start in an already infected vertex (denoted as $\mathrm{Pre}(P)$), i.e., the conditional probability $\Pr[\mathrm{Pre}(P) \in I \,|\, P]=1$ of $\mathrm{Pre}(P)$ being infected when path $P$ is chosen is 1.
Now, since $\xi(P \wedge v)$ is the joint probability, $\xi(P \wedge v)=\Pr[\mathrm{Pre}(P) \in I \,|\, P] \cdot \tilde{\pi}_2(P)$,
\begin{align}
    & \xi(P \wedge v) = \tilde{\pi}_2(P) & \forall P, v = \mathrm{Pre}(P) \ \text{.}
\end{align}

This representation of strategies of the attacker is sufficient to express the expected immediate reward of the strategy $\tilde{\pi}_2$, hence the constraint~\eqref{lp:onesided:br} can be changed to use the marginalized strategies,
\begin{align}
    & V \geq \sum_{P} \tilde{\pi}_2(P) c_{h,P} + \sum_{o} V_{h o} & \forall h \in E \ \text{.}
\end{align}

Importantly, we can also skip the computation of the belief $b^{h o}$ and compute the characteristic vector formed by the marginals $\chi^{h o}$ directly from the variables $\tilde{\pi}_2$ and $\xi$.
We now present the equation to compute the updated marginal $\chi^{h,\detO}$ given that the attacker has been detected while traversing the honeypot edge $h$.
\begin{equation}
    \chi^{h,\detO}_i = \!\!\!\!\!\!\!\!\!\!\!\!\!\!\!\sum_{P | h \in P \wedge P_{\leq (\cdot,v_i)} \subseteq P_{\leq h}}\!\!\!\!\!\!\!\!\!\!\!\!\!\!\! \tilde{\pi}_2(P) \qquad+ \!\!\!\!\!\!\!\!\!\sum_{P | h \in P \wedge P_{\leq (\cdot,v_i)} \not\subseteq P_{\leq h}}\!\!\!\!\!\!\!\!\!\!\!\!\!\!\! \xi(P \wedge v_i)
\end{equation}
The first sum stands for the probability that the attacker is detected while traversing edge $h$, but he infected $v_i$ beforehand.
The second sum represents the probability that the attacker used edge $h$ as well, but this time he has not infected $v_i$ using path $P$, however, the vertex $v_i$ has already been infected before starting to execute path $P$.

Analogously, we can obtain the probability $q^{h,\detO}$ that the attacker got detected while traversing edge $h$ as
\begin{equation}
    q^{h,\detO} = \sum_{P | h \in P} \tilde{\pi}_2(P) \ \text{.}
\end{equation}
\end{subequations}

We need not consider the subsequent stages where the attacker has not been detected (i.e., $\neg\detO$ observation has been generated) or the honeypot edge $h$ reaches the target vertex $v_{|V|}$.
In each of these cases, the target vertex has been reached and thus the value of the subsequent stage is zero.

\subsection{Initializing Bounds}
We now describe our approach to initialize bounds $\uv$ and $\ov$ (line~\ref{alg:hsvi:init} of Algorithm~\ref{alg:hsvi}).
Denote $C^*(v_i)$ and $\overline{C}^*(v_i)$ the shortest (i.e., cheapest) paths from $v_i$ to the target vertex $v_{|V|}$ when costs $C$ and $\overline{C}$, respectively, are used for all edges.

We initialize the lower bound $\uv$ using two linear functions $\alpha_1(\chi) = 0$ and $\alpha_2(\chi)=(\mathbf{a}^{(2)})^T \chi + \kk^{(i)}$ such that $\kk^{(i)}=C^*(v_1)$ and $\mathbf{a}^{(2)}_i=C^*(v_i)-C^*(v_1)$.

To initialize the upper bound $\ov$, we consider that exactly one vertex $v_i$ is infected and we consider the most expensive path $\overline{C}^*(v_i)$ from that vertex to the target $v_{|V|}$.
We use the set $\Upsilon=\lbrace (\chi^{(j)},y^{(j)}) \,|\, 1 \leq j \leq |V| \rbrace$ of points to initialize $\ov$ where
\begin{equation}
    \chi^{(j)}_i = \begin{cases}
        1 & i=j \\
        0 & i \neq j
    \end{cases}
    \qquad
    y^{(j)}=\overline{C}^*(v_j) \ \text{.}
\end{equation}

\section{Experimental Results}
In this section we experimentally evaluate the scalability and properties of our proposed abstraction technique based on the model from section~\ref{sec:model}.
Unless stated otherwise, we consider directed acyclic graphs generated from the Erd\H{o}s-R\'{e}nyi model with probability $p=0.5$ that each of the possible edges is included.
Furthermore, we make sure to include edges $(v_i,v_{i+1})$ for $1 \leq i \leq |V|-1$.
We set the costs $C(v_i,v_j)=j-i$ for every edge $(v_i,v_j)\in E$ when the honeypot is not deployed to $(v_i,v_j)$.
We do, however, apply significant penalty for traversing a honeypot edge where $\overline{C}(v_i,v_j)=j(j-i)$.
This setting is used to model layered networks commonly used in critical infrastructures~\cite{kuipers2006control}.
The attacker can either proceed to the subsequent layer (using edge $(v_i,v_{i+1})$), or use a shortcut (if available).
The costs $\overline{C}$ reflect the fact that the closer the attacker is to the target, the more secure the network is (i.e., he has to use a more expensive exploit to proceed).

All of the experiments have been evaluated on a machine with Intel i7-8700K and 32GB of RAM.
CPLEX 12.7.1 has been used to solve the linear programs used in the algorithms.

\subsection{Comparison with the Original Approach}
The main challenges of applying the original algorithm for solving one-sided partially observable stochastic games~\cite{horak2017aaai} are its memory requirements and the computational hardness of solving the stage games.
The algorithm explicitly reasons about each of the states of the game and the probability that the attacker chooses a given path in each of the states.
In this section, we demonstrate these challenges and illustrate the practical advantage of our proposed algorithm.

We used a set of randomly generated graphs (with varying number of vertices) and we attempted to solve these instances using both the original (unabstracted) approach and the algorithm we presented in this paper.
The parameters used for the original algorithm follow the parameters proposed in~\cite{horak2017aaai} while we modified the initialization of the bounds to make it valid for the undiscounted problem in question.
The target precision $\epsilon=0.1$ was used for both algorithms.

Figure~\ref{fig:unabstracted} shows the runtimes of the original algorithm (when applied to the unabstracted game) and our proposed approach.
We can observe that while for small instances, the original approach is competitive and can even outperform our proposed approach, for larger instances with large state spaces (the number of states in the game is up to $2^{|V|-2}+1$) the compact representation used in our approach turns out to be beneficial.
Moreover, the original approach has been unable to solve 50\% of the 11-vertex instances due to its memory requirements.
In contrast, our approach relying on the summarized abstraction based on marginal probabilities has not exceeded 1GB of memory consumption even when applied to the most challenging 17-vertex instances.
\newcommand\fmarksize{3pt}

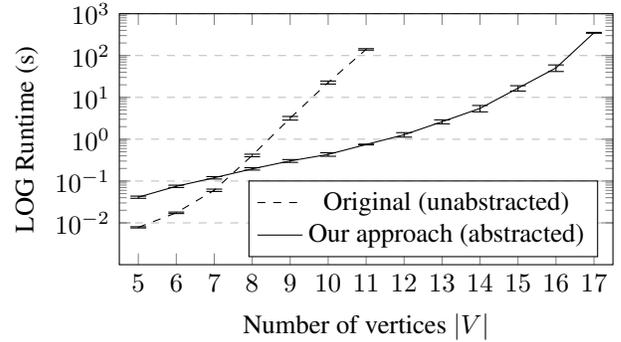
\begin{figure}
\centering
\begin{tikzpicture}
\begin{semilogyaxis}[ 
    width=0.95\linewidth,
    height=14em,
    xlabel={Number of vertices $|V|$},
    ylabel style={align=center},
    ylabel={LOG Runtime (s)},
    xmin=4.5, xmax=17.5,
    ymin=0.001, ymax=1000,
    xtick={5,6,7,8,9,10,11,12,13,14,15,16,17},
    ytick={0.01, 0.1, 1, 10, 100, 1000},
    legend pos=south east,
    legend style={font=\normalfont},
    ymajorgrids=true,
    grid style=dashed,
    scaled ticks=false, 
    mark size=\fmarksize,
    label style={font=\normalfont},
    tick label style={/pgf/number format/fixed, font=\normalfont} 
]

    \addplot[
    color=black,
    dashed,
    error bars/.cd,
        error bar style={color=black,solid},
    y dir=both, y explicit
    ]
    coordinates {
    (5.0, 0.00781188) +- (0.000253658,0.000253658)
    (6.0, 0.017405940594059) +- (0.0007289963261666,0.0007289963261666)
    (7.0,0.060207920792079) +- (0.0038861115998304,0.0038861115998304)
    (8.0,0.409940594059406) +- (0.0285329943110855,0.0285329943110855)
    (9.0,3.18382178217822) +- (0.318981536893921,0.318981536893921)
    (10.0,22.7026336633663) +- (1.87529621360055,1.87529621360055)
    (11.0,140.5449) +- (5.62986823831279,5.62986823831279)
    };

\addplot[
    color=black,
    error bars/.cd,
    y dir=both, y explicit
    ]
    coordinates {
    (5.0,0.041178217821782) +- (0.0024986554800374,0.0024986554800374)
    (6.0,0.074871287128713) +- (0.0047782143812587,0.0047782143812587)
    (7.0,0.119782178217822) +- (0.0079537739967942,0.0079537739967942)
    (8.0,0.19539603960396) +- (0.01262051567257,0.01262051567257)
    (9.0,0.302752475247525) +- (0.0235473667569883,0.0235473667569883)
    (10.0,0.432544554455445) +- (0.0414101642709914,0.0414101642709914)
    (11.0,0.746807692307692) +- (0.015969289781579,0.015969289781579)
    (12.0,1.27853465346535) +- (0.15819531697516,0.15819531697516)
    (13.0,2.61858415841584) +- (0.28638691948737,0.28638691948737)
    (14.0,5.45748514851485) +- (0.949392225017022,0.949392225017022)
    (15.0,16.608198019802) +- (2.42401332908958,2.42401332908958)
    (16.0,50.6150792079208) +- (9.07859864271665,9.07859864271665)
    (17.0,349.733732673267) +- (8.31833669324832,8.31833669324832)
    };

    \legend{Original (unabstracted), Our approach (abstracted)}
   
\end{semilogyaxis}
\end{tikzpicture}
\caption{Comparison of the runtime of the original (unabstracted) approach with the proposed one (using summarized abstraction). Confidence intervals mark standard error.}
\label{fig:unabstracted}
\end{figure}

\subsection{Abstraction Quality}
In this section, we focus on analyzing the abstraction quality.
In general, the summarized abstraction may lose important information needed to derive the optimal behavior.
We show, however, that with properly designed characteristic vectors the quality loss can be minimal.

The scalability issues of the original approach (see Figure~\ref{fig:unabstracted}) make it hard to determine the value $V^*(b^0)$ of the original, unabstracted game.
Therefore we present the quality analysis only for graphs with $5 \leq |V| \leq 10$ vertices where the original approach solved all instances considered.
In Table~\ref{tab:quality}, we present the empirical bound on the relative distance from the equilibrium of the unabstracted bound.
We compare the upper bound $\overline{V}(b^0)$ computed by the original approach for the unabstracted game with the lower bound $\uv(\chi^0)$ on the quality of the strategy when the abstraction is used.
We depict the maximum relative distance based on the 100 randomly generated instances for each size of a graph.
Observe that in all of the cases, the empirical upper bound on the relative distance (i.e., the worst-case possible quality loss due to abstraction) is below 1.0\%.
Note, that in all of the instances the quality of the strategy found by our novel algorithm is within the bounds of the original approach (i.e., the empirical bounds are likely to be overestimated).
\begin{table}
    \centering
    
    \bgroup
    \def\arraystretch{1.7}
    \setlength\tabcolsep{5pt}
    \begin{tabular}{|l||c|c|c|c|c|c|}
        \hline
        $|V|$ & 5 & 6 & 7 & 8 & 9 & 10 \\
        \hline
        $\frac{\overline{V}(b^0) - \uv(\chi^0)}{\uv(\chi^0)}$ (in \%) & 0.8 & 0.9 & 1.0 & 0.7 & 0.7 & 0.5 \\
        \hline
    \end{tabular}
    \egroup
    \caption{Empirical bound on the relative distance from the equilibrium of the unabstracted game based on 100 instances.}
    \label{tab:quality}
\end{table}

\section{Conclusions}
We focus on solving partially observable stochastic games (POSGs) and the representation of partial information in these games.
In the existing algorithms for solving subclasses of POSGs, the dimension of the belief space equals to the number of possible states.
This fact limits the scalability since both required memory and computation time grow rapidly.
We introduce a novel abstraction method and an algorithm for compact representation of the uncertainty in these POSGs. 
Our methodology is domain-independent and we demonstrate it on a motivating example from cybersecurity where the defender protects a computer network against an attacker who uses lateral movement in the network to reach the desired target. 
Experimental results show that our novel algorithm scales several orders of magnitude better compared to the existing state of the art with only negligible loss in quality (less than $1\%$). 
Our paper demonstrates practical aspects of algorithms for solving subclasses of POSGs and opens possibility to use similar compact representation for other domains.

\section*{Acknowledgment}
This research was supported by the Czech Science Foundation (grant no. 19-24384Y) and by the Army Research Laboratory and was accomplished under Cooperative Agreement Number W911NF-13-2-0045 (ARL Cyber Security CRA). The views and conclusions contained in this document are those of the authors and should not be interpreted as representing the official policies, either expressed or implied, of the Army Research Laboratory or the U.S. Government. The U.S. Government is authorized to reproduce and distribute reprints for Government purposes not with standing any copyright notation here on.

\bibliographystyle{named}
\bibliography{main}

\end{document}